\theoremstyle{definition}
\newtheorem{theorem}{Theorem}
\newtheorem{corollary}{Corollary}
\newtheorem{definition}{Definition}
\newtheorem{problem}{Problem}
\theoremstyle{remark}
\newtheorem{remark}{Remark}
\DeclareMathAlphabet{\mathpzc}{OT1}{pzc}{m}{it}
\DeclareFontFamily{U}{jkpmia}{}
\DeclareFontShape{U}{jkpmia}{m}{it}{<->s*jkpmia}{}
\DeclareFontShape{U}{jkpmia}{bx}{it}{<->s*jkpbmia}{}
\DeclareMathAlphabet{\mathfrak}{U}{jkpmia}{m}{it}
\DeclareMathOperator*{\Min}{Minimize }
\newcommand{\norm}[1]{\left\lVert#1\right\rVert}
\newcommand{\x}{\mathbf{x}}
\newcommand{\w}{\mathbf{w}}
\newcommand{\z}{\mathbf{z}}
\newcommand{\s}{\mathbf{s}}
\newcommand{\0}{\mathbf{0}}
\newcommand{\p}{\mathbf{p}}%
\newcommand{\e}{{\mathbf{e}}}%
\newcommand{\R}{\mathbb{R}}
\newcommand{\NN}{\mathbb{N}}
\renewcommand{\S}{\mathcal{S}}
\newcommand{\X}{\mathcal{X}}
\newcommand{\U}{\mathcal{U}}
\newcommand{\W}{\mathcal{W}}
\newcommand{\E}{\mathcal{E}}
\renewcommand{\P}{\mathcal{P}}
\newcommand{\Q}{\mathcal{Q}}
\newcommand{\al}[1]{\begin{align}#1\end{align}}
\newcommand{\eq}[1]{\begin{equation}#1\end{equation}}
\newcommand{\ald}[1]{\begin{aligned}#1\end{aligned}}
\newcommand{\eqn}[1]{\begin{equation*}#1\end{equation*}}
\newcommand{\aln}[1]{\begin{align*}#1\end{align*}}
\newcommand{\subeq}[1]{\begin{subequations}#1\end{subequations}}
\newcommand{\st}{\text{s.t. }}
\begin{document}
\title{Encirclement Guaranteed Cooperative Pursuit with\\ Robust Model Predictive Control}
\author{Chen Wang$^{1,2}$, Hua Chen${^1}$, Jia Pan${^2}$, and Wei Zhang${^{1,3}}$ 
\thanks{$^1$Department of Mechanical and Energy Engineering, Southern University of Science and Technology, Shenzhen, China. Emails: {\tt chenh6@sustech.edu.cn, zhangw3@sustech.edu.cn}}
\thanks{$^2$Department of Computer Science, The University of Hong Kong. Email: {\tt cwang5@cs.hku.hk, jpan@cs.hku.hk}}
\thanks{$^3$Peng Cheng Laboratory, Shenzhen, China}
\thanks{Corresponding author: Wei Zhang and Jia Pan}
\thanks{This work was supported in part by National Natural Science Foundation of China under Grant No. 62073159 and Grant No. 62003155, in part by the Shenzhen Science and Technology Program under Grant No. JCYJ20200109141601708, and in part by the Science, Technology and Innovation Commission of Shenzhen Municipality under grant no. ZDSYS20200811143601004.}
}
\maketitle

\begin{abstract}
This paper studies a novel encirclement guaranteed cooperative pursuit problem involving $N$ pursuers and a single evader in an unbounded two-dimensional game domain. Throughout the game, the pursuers are required to maintain encirclement of the evader, i.e., the evader should always stay inside the convex hull generated by all the pursuers, in addition to achieving the classical capture condition. To tackle this challenging cooperative pursuit problem, a robust model predictive control (RMPC) based formulation framework is first introduced, which simultaneously accounts for the encirclement and capture requirements under the assumption that the evader's action is unavailable to all pursuers. Despite the reformulation, the resulting RMPC problem involves a bilinear constraint due to the encirclement requirement. To further handle such a bilinear constraint, a novel encirclement guaranteed partitioning scheme is devised that simplifies the original bilinear RMPC problem to a number of linear tube MPC (TMPC) problems solvable in a decentralized manner. Simulation experiments demonstrate the effectiveness of the proposed solution framework. Furthermore, comparisons with existing approaches show that the explicit consideration of the encirclement condition significantly improves the chance of successful capture of the evader in various scenarios. 
\end{abstract}

\section{Introduction}\label{sec:intro}
Pursuit evasion, as one of the most important problems in the multi-agent literature, mainly concerns with maneuvering a group of agents called the pursuers to capture one or more evaders acting adversarially against the pursuers. Such a problem has been attracting a great amount of research attentions during the past several decades, due to its broad applicability in various practical scenarios including~\cite{Issac1965, littlewood1986, NOWAKOWSKI1983235, AIGNER19841, chung2011search}.


The pursuit evasion problem, albeit intuitively straightforward to envision, is fundamentally challenging to solve. Introduced as the ``lion and man'' problem, early-stage investigations on pursuit evasion problems mainly focused on establishing conditions that ensure existence of capturing strategies for the lions (pursuers)~\cite{Croft1964, littlewood1986, V1978, ibragimov1998game}. Under the continuous-time setting, with additional assumptions that the game domain is unbounded and the maximal speeds for the lions and the man are the same, Jankovic~\cite{V1978} showed that capturing is achievable if and only if the man is always kept inside the encirclement of the lions. Later works extend this result to discrete time setting~\cite{alexander2009capture, kopparty2005framework}. However, a pitfall in the adopted settings is that the evader's action is assumed to be known to the pursuers which gives one step advantage to the pursuers. More recently, generalizations of these classical results to randomized scenarios~\cite{Isler2005}, complex domains~\cite{Deepak2012} and different capturing conditions~\cite{Lavalle2001, Gerkey2006} have been continuously reported. Chen {\em et al.}~\cite{chen2016multi} explicitly considers the encirclement condition in the problem, reporting a minimum of $6$ pursuers for encirclement guarantees. 



On the other hand, the game-theoretic perspective on the pursuit evasion problem has been widely employed. Such a viewpoint leads to a differential game based reformulation of the pursuit evasion problem~\cite{Issac1965} whose optimal solution is fully characterized by the renowned Hamilton-Jacobi-Issacs partial differential equation (HJI PDE)~\cite{bacsar1998dynamic, Issac1965, captureFlag}. Nonetheless, the HJI PDE suffers from the curse of dimensionality. To alleviate the computational complexity, various strategies have been developed. Voronoi partition based method has been proposed initially for bounded convex space \cite{huang2011guaranteed}.  Later on, this method has been extended in several directions.  Zhou {\em et al.}~\cite{zhou2016cooperative} extends the concept of Voronoi partition to safe reachable set that is applicable to non-convex game domains.  Pierson {\em et al.}~\cite{Mac17RAL} extends the Voronoi partition based approach to high dimensional game domains and scenarios where the evader's position is uncertain~\cite{SHAH2019103246}.

In this paper, we study cooperative encirclement guaranteed pursuit problem involving multiple pursuers and a single evader in an unbounded 2D game domain under the assumption that the evader's action is fully unknown to all the pursuers. With this restriction on the underlying information structure, we explicitly take the encirclement condition into account in the problem formulation, in addition to the capturing condition. It is worth noting that, despite the fact that the importance of encirclement condition has been revealed decades ago in~\cite{V1978}, such a condition has been rarely exploited in constructing solutions to cooperative pursuit problems. To address the difficulties introduced by the limited information structure and the explicit consideration of the encirclement condition, a robust model predictive control (RMPC) framework is developed. Such a RMPC framework views the unknown evader's velocity as disturbance to the relative dynamics between the pursuers and the evader, and tries to find the pursuers' optimal actions ensuring the encirclement condition regardless of how the evader behaves. Unfortunately, the RMPC reformulation remains a bilinear optimization for the underlying problem. To tackle this bilinear issue, a novel encirclement guaranteed partitioning scheme is devised that simplifies the original RMPC problem to a number of tractable linear tube model predictive control (TMPC) problems. Numerical experiments demonstrate effectiveness of the proposed solution framework, whereas existing approaches fail to ensure encirclement and/or capture. 


The main contributions of this paper are summarized below. First, we consider a novel and practically important cooperative pursuit problem, whose task involves both capturing the evader as well as ensuring the encirclement condition. Second, a robust model predictive control framework is established to solve the underlying problem while assuming unavailability of the evader's action, which is generalizable to other scenarios with imperfect or restrictive information structures. Third, to efficiently solve the reformulated RMPC problem, a novel encirclement guaranteed partitioning scheme is devised, which fully exploits the features of the underlying problem. With the help of this partitioning scheme, the original bilinear RMPC problem reduces to a number of linear tube MPC problems that are practically solvable in a decentralized manner.

\section{Problem Description}\label{sec:prob_form}
In this paper, we are concerned with a cooperative pursuit problem involving $N$ pursuers and $1$ evader in a two-dimensional game domain. Generally speaking, the goal of the pursuers is to restrict the evader's possible motion and eventually capture the evader, while the evader aims to escape the pursuers' surrounding and avoid being captured. 

Denoting by $\p^i \in \R^2$ the position of the $i$-th pursuer and $\e\in \R^2$ the position of the evader, the discrete-time dynamics of these two classes of agents are given by 
\subeq{\al{ \e(t+1) &= \e(t) + \mathbf{u}_e(t), & \e(0) &= \e_0\\  \p^i(t+1) &= \p^i(t) + \mathbf{u}_p^i(t), & \p^i(0) &= \p_{0}^i,  \forall i\in \NN_{N}}} where $\mathbf{u}_e(t)\in \R^2$ is the evader's control input, $\mathbf{u}_{p}^i$ is the $i$-th pursuer's control input, $\e_0\in \R^2$ and $\p_{0}^i\in \R^2$ are the initial positions for the evader and $i$-th pursuer, respectively. We also use $\mathbb{N}_{N}$ to denote the number set $\{0,1,\dots,N-1\}$. To approximately account for actuation limits that commonly arise in practice, the following input constraints are imposed. 
\subeq{\al{\mathbf{u}_e \in \W &\triangleq \{\mathbf{u}_e| \norm{\mathbf{u}_e}_\infty \leq u_{e, max}\} \\  \mathbf{u}_{p}^i \in \U &\triangleq \{\mathbf{u}_p | \norm{\mathbf{u}_p}_\infty \leq u_{p, max}\}}}



To mathematically characterize the pursuers' first objective of restricting the evader's motion, we employ the following encirclement condition throughout this paper. 


\begin{definition}[Encirclement Condition]
The evader is said to be encircled by the pursuers, if it lies inside the convex hull $\E$ formed by all the pursuers, i.e., 
\aln{\e \in \E \triangleq \{\p\in \R^2: &\p =   \sum\limits_{i=0}^{N-1} \lambda^i \p^i, \lambda^i\ge 0 , \sum\limits_{i=0}^{N-1} \lambda^i= 1 \}}
\end{definition}

Fig.~\ref{fig:quadrant}(a) shows the encirclement condition. It is obvious that if the above condition is ensured indefinitely then the evader cannot escape from the region formed by all the pursuers. 

Having the encirclement condition satisfied, the region where the evader is allowed to freely move may still remain large. Hence, at least one of the pursuers needs to move close enough to the evader to ensure capture of the evader, which formalizes the following capture condition. 

\begin{definition}[Capture Condition]
The evader is said to be captured if there exists at least one pursuer whose distance from the evader is smaller than the capture radius $r_c$, i.e.,
\eqn{\exists i\in \NN_{N}, \text{ such that } \norm{\p^i(t) - \e(t)}_2 \leq r_c}
\end{definition}

Throughout this paper, we adopt the information structure that the evader's state (i.e., position) is available to all the pursuers, while its control input (i.e., velocity) remains private to itself. Under this information structure, we are concerned with the encirclement guaranteed cooperative pursuit problem which aims to find control inputs for the pursuers to persistently ensure encirclement and eventual capture the evader regardless of the evader's behavior.




\begin{problem}[Encirclement Guaranteed Cooperative Pursuit]\label{prob:eg_cp}
Given an initialization of the pursuers and the evader satisfying the encirclement condition, find the control law for the pursuers (either centralized or decentralized) such that encirclement condition is always satisfied and capture condition is achieved eventually.
\end{problem}


To tackle the above problem, the key challenge lies in the unavailability of the evader's control information to the pursuers. As a consequence, the pursuers' control inputs need to take into account of the worst case scenario in which the evader is always choosing its best possible control.
\begin{remark}\label{rem:game_sln}
The differential game approach is often adopted to address this difficulty. However, formulation and solution approaches following such a game-theoretic perspective are typically challenging to develop. More importantly, such game-theoretic based approaches suffer from the curse-of-dimensionality and is not numerically tractable in general. 
\end{remark}

In this paper, a novel robust model predictive control based perspective is employed, which leads to a tractable and decentralized solution. In the sequel, the robust model predictive framework for the encirclement guaranteed cooperative pursuit problem is developed.


\section{Overview of the Proposed Robust Model Predictive Control Framework}

Model predictive control (MPC), as one of the most powerful optimization-based approaches in robotics applications, iteratively solves an optimal control problem in real time and only applies the control input at the first time step to the system. Robust model predictive control (RMPC), in principle, adjusts the conventional MPC formulation to account for disturbances or uncertainties. 


At a generic time $t$, the goal of the pursuers can be formulated as minimizing the accumulated distances between the pursuers and the evader over a look-ahead horizon while ensuring encirclement regardless of the evader's input. Let $k$ be the prediction time index, and let $\p_k^i$ and $\e_k$ be the position of pursuer $i$ and the evader at time $t + k$. The evader's input can be viewed as a disturbance to the system, and thus we denote $\w_k  = - \mathbf{u}_e(t+k)$. At time $t$, the pursuit control can be obtained by solving the following RMPC problem.
\subeq{\label{eq:rmpc}\al{ \min\limits_{ \mathbf{u}_k^i \in \U, \lambda_k^i,\forall k}& \    \sum_{i=0}^{N-1} \sum_{k=1}^{T} \norm{\p_{k}^i - \e_{k}}^2_2 \\  \st \ \  & \p_{k+1}^i = \p_k^i + \mathbf{u}_k^i,   \forall k, \forall i \label{eq:dync1}\\ & \e_{k+1} = \e_k - \w_k, \forall k \label{eq:dync2}\\  & \label{eq:biCons} \sum_{i=0}^{N-1} \lambda_k^i\p_k^i = \e_k,\forall k\\  & \sum_{i=0}^{N-1} \lambda_k^i = 1,\forall k; \  \lambda_k^i \geq 0, \forall k, \forall i \label{eq:biCons3}}}
The constraints should hold for all possible $\mathbf{w}_k \in \W$. The above problem should be re-solved at each time $t$ based on updated state observations. Here, the cost function accounts for the capture condition via the accumulated distance between the evader and all the pursuers.  The encirclement condition is enforced via~\eqref{eq:biCons} and~\eqref{eq:biCons3}.  For any initial condition, if the RMPC problem is feasible, then no matter how the evader moves, there always exists controls for all the pursuers such that the evader will remain inside the encirclement of the pursuers and the cost function is minimized.

Note that the RMPC based reformulation does not change the zero-sum game nature of the problem, and the resulting RMPC problem remains challenging to solve, particularly due to the bilinear constraints guaranteeing encirclement~\eqref{eq:biCons} and~\eqref{eq:biCons3}. To address the bilinear constraint and efficiently solve the RMPC problem, we devise a tractable MPC based solution scheme. Roughly speaking, the core idea lies in approximating the bilinear encirclement constraint through a set of linear constraints that are carefully constructed via exploiting the structure of the underlying problem. This approximation technique subsequently leads to a simple RMPC problem that can be efficiently solved via quadratic programming based methods in a decentralized manner.

In more details, the key observation inspiring our approach lies in that the encirclement condition is persistently satisfied if there always exists a pursuer that stays inside each quadrant of the frame centered at the evader, as shown in Fig.~\ref{fig:quadrant}(b). Motivated by this observation, the proposed solution framework involves two main steps. First, a special partition of the two-dimensional game domain is constructed, which ensures the encirclement condition if each partition set always contains at least one pursuer during the pursuit process. Then, the cooperative controller synthesis problem for the pursuers can be recast as the problem of controlling each pursuer to stay inside the correspondingly assigned partition set. Such a specific problem can subsequently be reformulated as a number of simple tube model predictive control (TMPC) problem separable among all pursuers, which only needs to be solved in a decentralized manner. 



\begin{figure}[tp!]
	\centering
	{\vspace{0.1in}
	\includegraphics[width=1.0\linewidth]{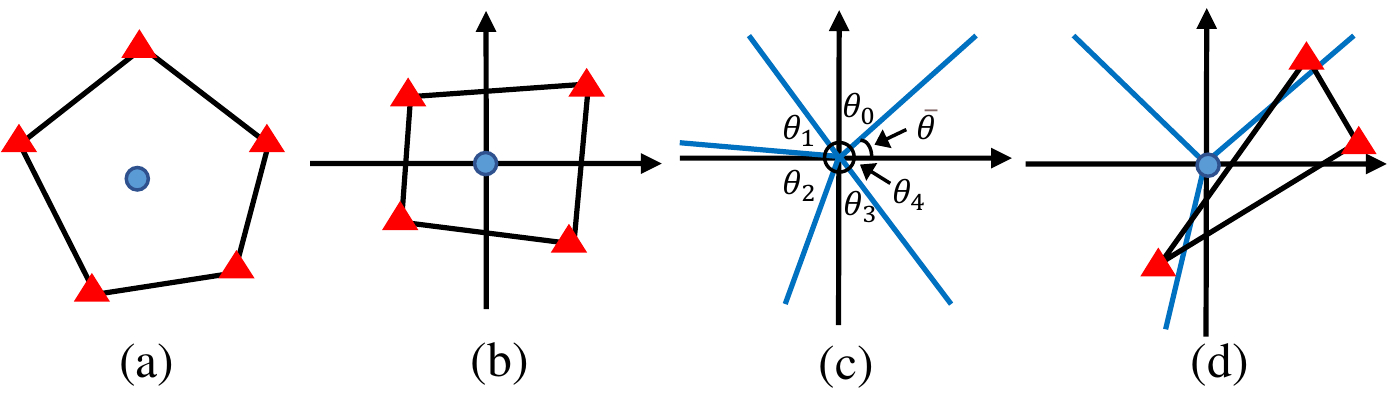}}
	\caption{\footnotesize Red triangles represent pursuers and blue circle represents evader. Blue line represents angle-based Partition. (a) Encirclement condition. (b) Satisfaction of the encirclement condition with four pursuers. (c) Angle based partition. (d) Example of angle-based partition that is not an Encirclement Guaranteed Partition.}
	\label{fig:quadrant}
	\vspace{-10px}
\end{figure}
\section{Encirclement Guaranteed Partition based Cooperative Pursuit via Tube MPC}

\subsection{Encirclement Guaranteed Partition (EGP)}\label{sec:egp}
As briefly mentioned in the previous section, the first step of the proposed solution framework involves construction of a carefully designed partition of the two-dimensional domain that inherently ensures encirclement. To begin with, a partition $\P = \{P_m\}$ of the considered game domain $\R^2$ is a set of subsets of $\R^2$, i.e., $P_m\subset \R^2$ such that:
\begin{enumerate}
    \item The union of all $P_m$'s is the overall game domain, i.e., $\bigcup\limits_m P_m = \R^2$,
    \item For any two elements in the partition, their intersection is empty, i.e., $P_m\bigcap P_l = \emptyset$ for all $m\neq l$.
\end{enumerate}

Without loss of generality, consider a reference frame centered at the evader now, as shown in Fig.~\ref{fig:quadrant}(c). By resorting to this evader-centered frame, partition of the game domain can be simply represented via a set of angles, according to the following definition. 
\begin{definition}[Angle-based Partition]\label{def:partition_angle}
The set of angles $\Theta =\left\{\bar{\theta},  \theta_0,\ldots,\theta_{M-1}\right\}$ defines a valid partition for $\R^2$, if 
\eq{\sum\limits_{m=0}^{M-1}\theta_m = 2\pi, \text{ and }  \theta_m>0,  \forall m. } The corresponding partition of $\R^2$ is denoted by $\P^\Theta$.
\end{definition}



In the above definition, $\bar{\theta}$ is used to identify the starting angle of the partition.
It is worth mentioning that, albeit the above definition defines valid partitions of the two-dimensional game domain, the partitions we are looking for need to possess additional properties related to the encirclement requirement. In particular, the following encirclement guaranteed partition is adopted in this section. 

\begin{definition}[Encirclement Guaranteed Partition]\label{def:feas_part}
A valid angle-based partition $\P^\Theta = \left\{P^\Theta_m\right\}_{m=0}^{M-1}$ is an encirclement guaranteed partition, if the convex hull of any $M$ points picked distinctly from the partition elements contains the origin of the frame, i.e., 
\eqn{\ald{&\forall p_m\in P^\Theta_m, \ \forall m \in \NN_{M} \\ &\exists \{\eta_m \geq 0\}_{m=0}^{M-1},    \sum\limits_{m=0}^{M-1} \eta_m = 1, \text{ such that }\0 = \sum\limits_{m=0}^{M-1} \eta_m p_m, }}
\end{definition}

\begin{remark}\label{rem:eg_part}
Apparently, if $\Theta$ defines an encirclement guaranteed partition, then it suffices to keep at least one pursuer inside each partition set to ensure encirclement of the evader. 
\end{remark}

It turns out that encirclement can be guaranteed by a straightforward condition on the angles defining the partition, which is summarized in the following theorem. 



\begin{theorem}\label{the:erp_cond}
An angle-based partition $\P^\Theta$ guarantees encirclement if and only if 
\eq{\label{eq:angle_part_thm}\theta_0 + \theta_1 \leq \pi, \ldots \theta_{m} + \theta_{m+1} \leq \pi, \ldots, \theta_{M-1} + \theta_0 \leq \pi.}
\end{theorem}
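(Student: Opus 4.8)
The plan is to recast the geometric encirclement condition of Definition~\ref{def:feas_part} as a purely combinatorial statement about arcs on the unit circle, and then check the angle inequality \eqref{eq:angle_part_thm} against it. The point of departure is the elementary consequence of the separating hyperplane theorem: for nonzero $p_0,\dots,p_{M-1}\in\R^2$, the origin lies in $\mathrm{conv}\{p_0,\dots,p_{M-1}\}$ if and only if the $p_m$ do \emph{not} all lie in one open half-plane through the origin. Since each partition element $P^\Theta_m$ is an angular sector (a cone with apex at the origin), it meets an open half-plane $H$ exactly when one can pick $p_m\in P^\Theta_m\cap H$. Hence $\P^\Theta$ is an encirclement guaranteed partition if and only if there is no open half-plane through the origin meeting every sector. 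Identifying an open half-plane with the open arc of length $\pi$ it cuts on the unit circle, and $P^\Theta_m$ with the closed arc of length $\theta_m$ it subtends, and passing to complements, this reduces the claim to: $\P^\Theta$ guarantees encirclement $\iff$ every closed arc of length $\pi$ on the circle contains at least one full sector arc.

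For the ``if'' direction I would assume $\theta_m+\theta_{m+1}\le\pi$ for all $m$ (indices cyclic) and show every closed arc $C$ of length $\pi$ contains a full sector. The sector boundaries are $M$ ``dividing rays.'' If the interior of $C$ contains two or more dividing rays, then two consecutive ones already bracket a full sector lying inside $C$. Otherwise the interior of $C$ meets at most one dividing ray, so $C$ is covered by sub-arcs of at most two consecutive sectors $P^\Theta_a$, $P^\Theta_{a+1}$; since by assumption no full sector sits inside $C$, each such sub-arc is strictly shorter than its sector, giving $\pi=|C|<\theta_a+\theta_{a+1}\le\pi$ (in the single-sector case, $\pi<\theta_a<\pi$, using $\theta_{a+1}>0$), a contradiction. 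Thus $C$ contains a full sector, and by the reduction encirclement is guaranteed.

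For the ``only if'' direction I would prove the contrapositive: if $\theta_m+\theta_{m+1}>\pi$ for some $m$, then the union of the two adjacent sector arcs $P^\Theta_m$ and $P^\Theta_{m+1}$ has length exceeding $\pi$, so it contains a closed sub-arc $C$ of length $\pi$ positioned so as to contain neither $P^\Theta_m$ nor $P^\Theta_{m+1}$ in full (the strict inequality is exactly the slack that makes such a position available). Then $C$ contains no full sector at all, since every remaining sector meets $P^\Theta_m\cup P^\Theta_{m+1}\supseteq C$ only along a single boundary ray. The complementary open arc is therefore a half-plane meeting every sector, and choosing a nonzero $p_m\in P^\Theta_m$ inside it for each $m$ yields a configuration whose convex hull excludes the origin; hence $\P^\Theta$ is not encirclement guaranteed.

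The step I expect to require the most care is the first one: making the equivalence between ``encirclement guaranteed'' and ``every closed length-$\pi$ arc contains a full sector'' fully rigorous, in particular handling the open/closed conventions cleanly (open half-planes versus closed complementary arcs, and the boundary rays of sectors, which belong to exactly one partition element). These boundary subtleties are harmless -- any point can be perturbed within its sector into the relevant open arc -- but they must be stated precisely. Once that reduction is in place, the counting argument for ``if'' is short and the construction for ``only if'' is explicit; the degenerate case $M=2$ (where $\theta_0+\theta_1=2\pi>\pi$, so the theorem correctly asserts no such partition is encirclement guaranteed) is absorbed into the same construction.
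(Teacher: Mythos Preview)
Your argument is correct and complete (modulo the boundary bookkeeping you already flag), but it takes a genuinely different route from the paper's. The paper works directly with the convexity of adjacent sector unions: for sufficiency, it observes that $\theta_m+\theta_{m+1}\le\pi$ is exactly the condition making $P^\Theta_m\cup P^\Theta_{m+1}$ convex, so each edge $x_mx_{m+1}$ of the polygon stays inside its sector pair, and the polygon therefore winds once around the origin; for necessity, it notes that when $\theta_m+\theta_{m+1}>\pi$ the union is non-convex, so one can place $x_m,x_{m+1}$ with the segment between them passing outside the union, which then precludes the origin from the convex hull. By contrast, you first reduce the problem, via the separating-hyperplane characterization, to the purely combinatorial statement ``every closed semicircular arc contains a full sector arc,'' and then check the angle inequality against that. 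Your reduction costs a little overhead but buys a cleaner and more careful necessity argument: the paper's one-line claim that ``no convex combination of $x_m,x_{m+1}$ and other points can be the origin'' is intuitively right but not fully justified as stated, whereas your construction of a closed arc $C\subset P^\Theta_m\cup P^\Theta_{m+1}$ of length $\pi$ containing neither sector makes the witnessing half-plane explicit. The paper's convexity viewpoint, on the other hand, gives a more immediate geometric picture and a very short sufficiency proof.
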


 \begin{proof} 
\textbf{Sufficiency:} Suppose the condition holds, then $P_m\cup P_{m+1}$ is a  convex hull. It follows that we can choose any point $x_m$ from $P_m$ and any point $x_{m+1}$ from $P_{m+1}$, such that the line connecting $x_m$ and $x_{m+1}$ falls inside $P_m\cup P_{m+1}$. It follows that the polygon formed by connecting all $M$ points which are picked distinctly from the partitions will contain the origin. This polygon is also contained in the convex hull of all $M$ points. Therefore, the origin is contained in the convex hull and $\P^\Theta$ is an Encirclement Guaranteed Partition.

\textbf{Necessity:} Suppose $\theta_m + \theta_{m+1} > \pi$, then $P_m \cup P_{m+1}$ is non convex. It follows that we can choose one point $x_m$ from $P_m$ and one point $x_{m+1}$ from $P_{m+1}$, such that line $x_mx_{m+1}$ contain line segment outside $P_m \cup P_{m+1}$. It is clear then that no convex combination  of $x_m$, $x_{m+1}$ and other points that are not in $P_m\cup P_{m+1}$ can be origin. 

\end{proof}
Given an angle-based partition, checking whether it guarantees the encirclement property is simply checking~\eqref{eq:angle_part_thm}. Moreover, the following corollary can be concluded. 

\begin{corollary}\label{coro:minPart}
An encirclement guaranteed partition contains at least $4$ elements. 
\end{corollary}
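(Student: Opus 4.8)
The plan is to argue directly from Theorem~\ref{the:erp_cond}, which reduces the encirclement-guaranteed property of an angle-based partition to the collection of cyclic inequalities~\eqref{eq:angle_part_thm}. Since, by Definition~\ref{def:feas_part}, an encirclement guaranteed partition is in particular a valid angle-based partition, it suffices to show that no family of angles $\{\theta_0,\ldots,\theta_{M-1}\}$ with $M\le 3$ can simultaneously satisfy the normalization $\sum_{m=0}^{M-1}\theta_m = 2\pi$ and $\theta_m>0$ from Definition~\ref{def:partition_angle} together with all the inequalities in~\eqref{eq:angle_part_thm}.

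First I would add up the $M$ cyclic inequalities $\theta_m + \theta_{m+1}\le\pi$, with indices taken modulo $M$. On the left-hand side every angle $\theta_m$ occurs in exactly two of these inequalities — once as the first summand (in $\theta_{m-1}+\theta_m$) and once as the second (in $\theta_m+\theta_{m+1}$) — so the sum of the left-hand sides is $2\sum_{m=0}^{M-1}\theta_m$, while the sum of the right-hand sides is $M\pi$. This gives $2\sum_{m=0}^{M-1}\theta_m \le M\pi$, and substituting $\sum_{m=0}^{M-1}\theta_m = 2\pi$ yields $4\pi\le M\pi$, i.e., $M\ge 4$. This rules out $M=1,2,3$ in one stroke, with no separate case analysis, and the bound is tight: for $M=4$ the choice $\theta_m=\pi/2$ for all $m$ makes every consecutive pair sum exactly to $\pi$, so a four-element encirclement guaranteed partition does exist.

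The only point needing a little care is the index bookkeeping when $M$ is small, i.e., checking that the ``each angle appears exactly twice'' count is still valid for $M=2,3$ given that~\eqref{eq:angle_part_thm} includes the wrap-around term $\theta_{M-1}+\theta_0$. For $M=2$ that term coincides with $\theta_0+\theta_1$, but counting both inequalities with multiplicity still produces $4\pi\le 2\pi$, a contradiction, so the conclusion is unaffected; similarly $M=3$ gives $4\pi\le 3\pi$. I expect no real obstacle here — the corollary is essentially a one-line consequence of Theorem~\ref{the:erp_cond} combined with the normalization $\sum_m\theta_m=2\pi$.
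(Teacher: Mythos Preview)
Your argument is correct. The paper does not actually spell out a proof of Corollary~\ref{coro:minPart}; it merely states the corollary immediately after Theorem~\ref{the:erp_cond} with the comment that it ``can be concluded'' from the characterization~\eqref{eq:angle_part_thm}. Your summation of the $M$ cyclic inequalities, together with the normalization $\sum_m\theta_m=2\pi$, is exactly the intended one-line derivation, and the tightness check at $M=4$ with $\theta_m=\pi/2$ is a nice addition.
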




Recall that, the two-dimensional game domain should be partitioned so that the resulting partition is encirclement guaranteed. More importantly, there always exists at least one pursuer staying indefinitely in each corresponding partition set. Therefore, constructing the encirclement guaranteed partition relies on the pursuers' and the evader's initializations. As a result, given any initialization, we need to further investigate whether there exists an associated encirclement guaranteed partition and if so how to find such partition. 

According to Corollary~\ref{coro:minPart}, it is required to have at least $4$ pursuers to encircle a single evader. Without loss of generality, we consider the scenario where $N\geq4$ pursuers are involved and $4\leq M \leq N$ partitions are to be identified. Let the relative angle between each pursuer $i$ and the evader with respect to the evader-centered frame be $\alpha_i$, finding the encirclement guaranteed partition is basically trying to separate the pursuers while satisfying the condition in Theorem~\ref{the:erp_cond}, which is formulated as the following quadratic program (QP).

 \subeq{
 \label{eq:ERPQP}
 \al{ \min \limits_{\bar{\theta}, \theta_0, ..., \theta_{M-1}} \quad &\sum_{i=0}^{M-1} (\frac{2\pi}{M} - \theta_i)^2 + \bar{\theta}^2 \\ 
      \st\quad \label{eq:part_q1}\quad &\alpha_M- 2\pi < \bar{\theta} < \alpha_1  \\
      &  \label{eq:part_q5}\alpha_{j}\!<\! \bar{\theta}\! +\! \sum_{m=0}^{j} \theta_{m} \! <\! \alpha_{j+1}, \forall j \in \NN_{M-1} \\
      & \label{eq:part_q6}\theta_m > 0, \forall m \in \NN_{M} \\
      & \theta_m + \theta_{m+1} \leq \pi, \forall m \in \NN_{M-1} \\
      & \theta_0 + \theta_{M-1} \leq \pi,  \label{eq:part_q9} \sum_{i=0}^{M-1} \theta_i = 2\pi  
     }}

In the above problem, the constraints~\eqref{eq:part_q1} and~\eqref{eq:part_q5} basically requires that the resulting partition separates the pursuers and embraces at least one pursuer in each partition set, while the constraints~\eqref{eq:part_q6} to~\eqref{eq:part_q9} enforces that the resulting partition is encirclement guaranteed. 

To conclude this subsection, the proposed partition scheme is summarized with an algorithm whose pseudo-code is provided in Algorithm~\ref{alg:erp}. 

\begin{algorithm}[bp!]
\SetAlgoLined
Compute ${\alpha_i}, \forall i \in \NN_{N}$\;
From $\alpha_i$ choose a subset with $M$ angles\;
Solve QP (\ref{eq:ERPQP})\;
return solution\;
\caption{EGP construction}\label{alg:erp}
\end{algorithm}

\begin{remark}
In the second step of this algorithm, we need to choose $M$ angles from $N$ angles. Here we have one requirement that the corresponding $M$ pursuers of these $M$ angles form an encirclement of the evader. What's more, we would prefer these $M$ angles to be more evenly spaced so that the returned EGP consists of more equally sized partitions.
\end{remark}

\subsection{Tube MPC-Based Cooperative Pursuit}
Given the encirclement guaranteed partition for any feasible initialization of the pursuers and evader, the cooperative pursuit problem we aim to solve can then be reformulated as the problem of controlling all pursuers to persistently achieve the encirclement constraint. Thanks to the partitioning scheme and its properties, the problem can further be simplified to how to keep each pursuer staying in its assigned partition set, which will be discussed in details here.

By virtue of the angle-based partition strategy, each partition set is polyhedral and hence can be represented via linear inequalities. This key feature, combined with its inherent encirclement preserving property, allows for the desirable linear approximation to the original bilinear RMPC problem~\eqref{eq:rmpc}. To facilitate our controller synthesis, we pick an arbitrary partition $P_m^\Theta\in \P^\Theta$, and suppose the $i_mth$ pursuer is inside $P_m^\Theta$ at time $t$. For notation simplicity, we drop the player index and write the relative dynamics between the $i_mth$ pursuer and the evader as follows:
\eq{\x(t+1) = \x(t) + \mathbf{u}(t) + \w(t) \quad (\text{in Partition } P_m^\Theta) \label{eq:rt_system}}
where $\x = \p^{i_m} - \e$, $\mathbf{u} = \mathbf{u}_p^{i_m}$ and $\w = - \mathbf{u}_e$. Note that, this dynamics basically describes the pursuer's dynamics in the evader-centered frame that was used in the construction of the partition scheme. 


With this reformulated dynamics, solving the following model predictive control problem yields a control input for each pursuer that is capable of ensuring encirclement and improving capture. With $k$ being the prediction time step, the decentralized robust MPC to be solved at time step~$t$ is:
\subeq{\label{eq:dcrmpc}\al{ \min\limits_{\mathbf{u}_k}  \quad  & \sum \limits_{k=0}^{K-1}\norm{\x_k}_2 \\ \st \quad & \x_{k+1} = \x_k + \mathbf{u}_k + \w_k, \forall k\\ 
& \x_0=\x(t)\\ &\x_k \in P^\Theta_m \triangleq \X,  \mathbf{u}_k \in \U,\forall k}}
The constraints should hold for all possible $\mathbf{w}_k \in \W$. Here $\x_k$ and $\mathbf{u}_k$ are the predicted states and controls of the relative dynamics in partition $P_m^\Theta$ given by~(\ref{eq:rt_system}).
In the above MPC problem, the evader's control is viewed as disturbance to the relative dynamics. Therefore, the pursuer's control needs to be able to maintain the pursuer's position inside its corresponding partition for any possible evader's control. Thanks to the polyhedral nature of the constraint sets, the above MPC problem~\eqref{eq:dcrmpc} can be solved by TMPC.

The TMPC problem is typically solved in a two-step fashion~\cite{MAYNE2005219, Sasa2005}. At the first step, a so-called invariant set $\S$ which specifies the effect of the evader's action to the uncertainties of the pursuer's state is computed. Such a set is mathematically the minimal robust positively invariant set of the underlying dynamics, which can be efficiently computed via polyhedral set operations. Once such a set is obtained, it will be utilized as part of the state constraint in a linear MPC problem. Through solving the resulting linear MPC, the desired control for the pursuer can finally be constructed.

In the sequel, more details regarding the formulation and solution of the TMPC problem are provided. Especially, thanks to the simplicity of the relative dynamics~\eqref{eq:rt_system} and the polyhedral nature of the state and input constraints, the environment set $\S$ can be analytically determined. 


\subsubsection{System Decomposition}
\label{sec:sys_decom}
To begin with, we decompose the relative dynamics~\eqref{eq:rt_system} into a nominal part and a difference part.  The nominal part is $\z(t+1) = \z(t) + \mathbf{v}(t)$, where $\z$ is the nominal state, and $\mathbf{v}(t)$ is the nominal control. The difference between $\x(t)$ and $\z(t)$ is refer to as the difference term $\s(t) = \x(t) - \z(t)$. The control law is chosen as $\mathbf{u}(t) = \mathbf{v}(t)  -\s(t)$. It follows that $\s(t+1) = \w(t)$.  Therefore $\s \in \S = \W$ where $\S$ is the Invariant Set in TMPC. Under this decomposition, $\forall \x(t) \in \z(t) \oplus \S, \forall \w(t) \in \W$, we have $\x(t+1) \in \z(t+1)\oplus \S$. Here $\oplus$ is the Minkowski addition.


\subsubsection{Constraints Construction} 

 
We then begin to construct the constraints for the nominal state. The nominal state should satisfy $\z(t)\oplus \S \subseteq \X$ since $\x \in \X$. Equally speaking, $\z(t) \in \overline{\X}$, where $\overline{\X}$ is $\X \ominus \S $ and $\ominus$ is the Pontryagin set difference.  Similarly, $\mathbf{v} \oplus (-\S) \in \U$ since $\mathbf{u} \in \U$. Therefore, $\mathbf{v}\in \overline{\U}$, where $\overline{\U}$ is $\U \ominus (-S)$. Note that $\z \in \overline{\X}$ and $\mathbf{v} \in \overline{\U}$ are the constraints for the nominal system. We use the Multi-Parametric Toolbox 3 to do the computation \cite{MPT3}.



\subsubsection{Quadratic Programming Formulation}
Now we can form the quadratic programming (QP) of TMPC with $K$ horizon.  The input of the quadratic optimization problem is $\x$ which is the current position of the pursuer, and the solution to this problem is a sequence of control laws. Suppose that the current position of pursuer $i$ is $\x$. The optimization to be solved is:
\subeq{\label{eq:TMPCOp}\al{ \Min\limits_{\substack{\z_0,\z_1,...,\z_K\\ \mathbf{v}_0,\mathbf{v}_1,...\mathbf{v}_{K-1}}} & \sum \limits_{k=0}^{K-1}(\norm{\z_k}_Q+\norm{v_k}_R) + \norm{z_K}_P \\  \st \quad & \z_{k+1} = \z_k + \mathbf{v}_k, \mathbf{v}_k \in \overline{\U}, \forall k \in \NN_{K}\\ & \z_k \in \overline{\X}, \forall k \in \NN_{K+1}\\ & \x - \z_0 \in \S  }}
Here $\overline{\X} = \X \ominus \S$, $\overline{\U} = \U \ominus \S$, and $\S = \W$.  Note that the Pontryagin Set Difference of two polytopes is still a polytope. Therefore, this optimization is a quadratic program and can be solved efficiently.

Solution to this optimization problem is sequences of nominal states $\z_k$ and controls $\mathbf{v}_k$. Let $\x_0, \x_1, ..., \x_{K-1}$ with $\x_0 = \x$ be the possible states sequence of (\ref{eq:rt_system}). The resulting control laws is then given by
\eq{\mu_k(\x_k) = \mathbf{v}_k  -\x_k + \z_k, \forall k \in \NN_{K}.\label{eq:rtmpc_ctrl_law}} According to Section \ref{sec:sys_decom}, we have $\x_k \in \z_k \oplus \S$ under control laws (\ref{eq:rtmpc_ctrl_law}). Therefore with this control law, we can ensure $\x$ in the next time step will still stay in $\overline{\X} \oplus \S \subseteq \X$ under any $\w \in \W$. Hence TMPC method has recursive feasibility and by applying the TMPC control law we can ensure the constraints in (\ref{eq:TMPCOp}) is never violated. 
\begin{remark}
In order for $\overline{\U}$ to be non-empty, we need to have $u_{p, max} \geq u_{e, max}$. If $u_{p, max} = u_{e, max}$, $v_k$ is always $0$ and only the ancillary controller $z_k - x_k$ plays the role.
\end{remark}

\begin{algorithm}[htbp]
\SetAlgoLined
$t=0$\;
Calculate $\x_i(t) = \p_i(t) -  \e(t)$,$i\in\NN_{N}$\;
Construct EGP $\P^\Theta$ with 4 partitions\;
Construct QP  $\Q_i$ for each partition element $P_i$ of EGP\;
\While{Evader is not captured}{
\For{each $P_i$}{
Pursuer farthest from the origin in $P_i$ get its control by solving $\Q_i$\;
Other pursuers in $P_i$ use Direct Charge method\;
}
pursuers move\;
$t = t+1$\;
\caption{TMPC-based Pursuit Framework}\label{alg:tmpc} }
\vspace{-0.07in}
\end{algorithm}
\subsection{Discussions and Implementation Details}\label{sec:imp_dis}
To sum up, further discussions on the developed framework and implementation details are provided. 


First, the partition and TMPC based solution framework developed in this section can be viewed as a linear approximation to the bilinear RMPC problem~\eqref{eq:rmpc}, which exploits the special structure of the underlying problem setting. Owing to Corollary~\ref{coro:minPart}, it is immediate that if more than $4$ pursuers are involved in the problem, it suffices to assign $4$ pursuers to ensure the encirclement condition, while the remaining pursuers are simply assigned the task of capturing the evader without caring too much about the encirclement requirement. 

Inspired by the above observation, one simple and effective way of implementing the algorithm involves assigning the task of encirclement guarantee to $4$ selected pursuers, and all others are assigned the task of simply chasing the evader. By doing so, persistent satisfaction of the encirclement condition can be theoretically ensured, while capturing condition can be empirically guaranteed. Pseudo-code of an algorithm summarizing the developed framework is given in Algorithm~\ref{alg:tmpc}.

It should be noted that it is possible to have a decentralized implementation where all pursuers get their control by solving the corresponding TMPC problem. Specifically, communication among the pursuers is only needed at initialization, when partitions and the corresponding assignment need to be determined. Afterwards, each pursuer's control input depends only on its own and the evader's state.

\section{Simulation Results}\label{sec:res}

To validate the effectiveness of the proposed framework, we first verify the  partitioning scheme developed in Section~\ref{sec:egp}. Then, the performance of the overall framework is demonstrated. The reader is kindly referred to the supplementary video for comprehensive simulation results\footnote{\url{https://sites.google.com/view/egcp/}}.

\subsection{Effectiveness of the Proposed EGP Scheme}

Performance of Algorithm~\ref{alg:erp} with a number of randomly initialized configurations of pursuers and evader is demonstrated in Fig.~\ref{fig:ERP}. A total of $8$ scenarios are tested, in which first three columns show the results involving $N = 7$ pursuers randomly initialized with uniform distribution within the box region $[-5,-5]\times[-5,-5]$ while the last column shows scenarios of initializations that do not admit EGPs. From the figure, it can be easily seen that the proposed Algorithm identifies EGPs under proper initializations.

\begin{figure}[htbp]
	\centering
	{\includegraphics[width=1\linewidth]{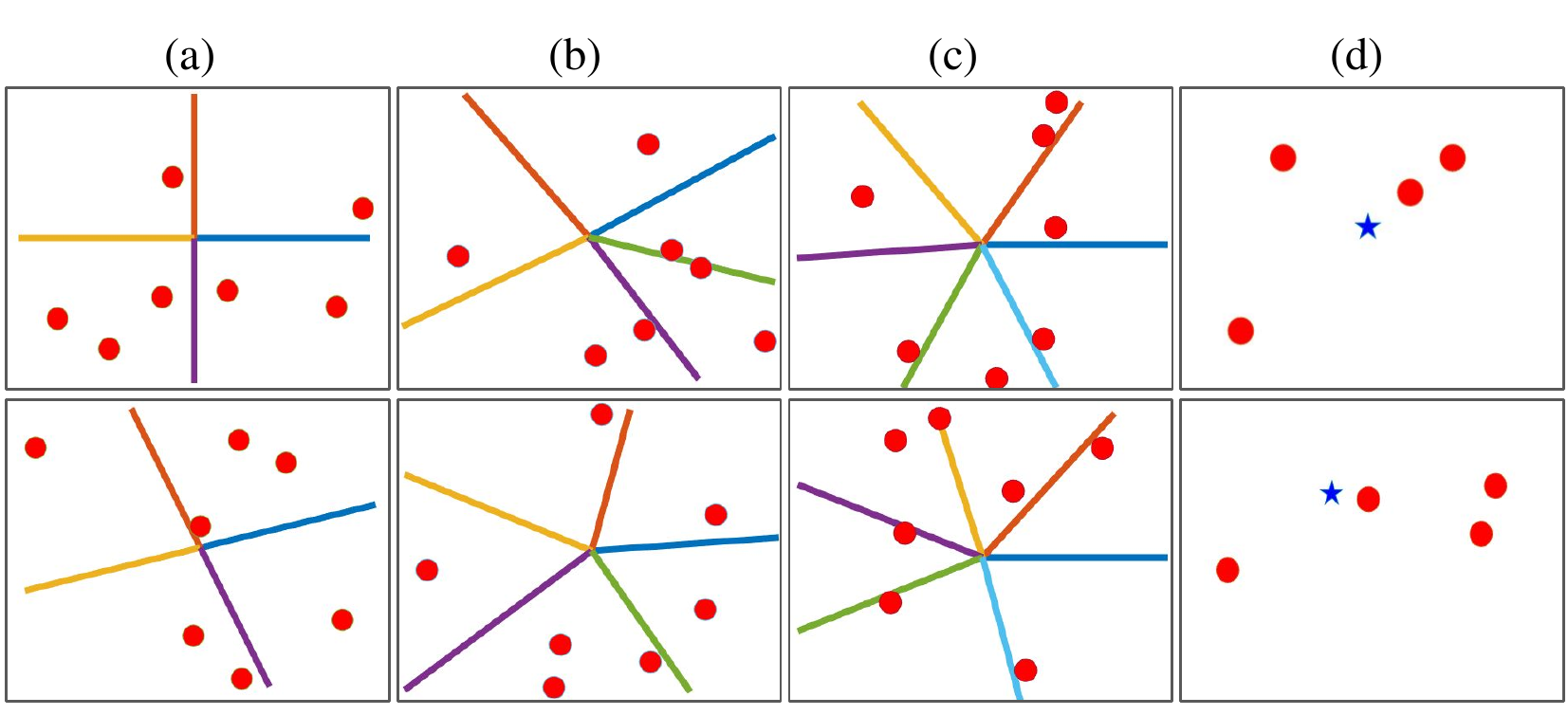}}
	\caption{\footnotesize Red circle represents pursuer. Colored line represents EGP. Blue star represents evader. Column (a): EGP with $7$ randomly generated pursuers and $4$ partitions; (b): EGP with $7$ randomly generated pursuers and $5$ partitions; (c): EGP with $7$ randomly generated pursuers and $6$ partitions; (d): Cases where no EGP exists.  }
	\label{fig:ERP}
	\vspace{-10px}
\end{figure}

\subsection{Performance of the Overall Framework}

With the effective partitioning scheme, performance of the overall proposed framework can then be validated. In the numerical test, $5$ pursuers are considered. Input limits for the pursuers and the evader are set to be $u_{e, max} = 1$ and $u_{p, max} = 1.1$. Capture radius defining the capture condition is set to be $r_c=5$. Parameters in Algorithm~\ref{alg:tmpc} are set to be $Q=I$, $R = \mathbf{0}$ and $P=3I$, where $I$ is the $2\times2$ identity matrix and $\mathbf{0}$ is the $2\times2$ zero matrix.


Our proposed TMPC-based pursuit framework Algorithm~\ref{alg:tmpc} is implemented in MATLAB. The Voronoi Partition based method~\cite{Mac17RAL} and the ``Direct Charge" method are implemented as baselines for comparison. Roughly speaking, the Voronoi Partition method first constructs a Voronoi partition given the positions of all pursuers and evader, then commands each pursuer to move towards the centroid of the common boundary between the evader and the corresponding pursuer. One thing to note is that once the evader is outside the encirclement of the pursuers, the Voronoi partition method will fail because the area of evader's Voronoi partition become infinity. In our implementation of Voronoi Partition based method, we use a virtual boundary to calculate the pursuers' controls once the evader is outside the encirclement. In the ``Direct Charge" method, the pursuer moves directly towards the evader with its maximal velocity.


In the simulations, $5$ pursuers initialized at $(10, 90)$, $(-60, 80)$, $(-90, -90)$, $(90, -10)$ and $(-90, 30)$ participate in the cooperative pursuit, whereas the evader is initialized at $(0,0)$. Following the discussion in Section~\ref{sec:imp_dis}, $4$ pursuers are selected to take care of the encirclement condition. Fig.~\ref{fig:RMPC_sim}-\ref{fig:DC_sim} show the snapshots of pursuit processes with our proposed approach, the Voronoi Partition based approach and the ``Direct Charge'' approach, respectively. In all these three figures, the pursuers are marked using red rectangle and the evader is shown using blue circle. In order to illustrate the encirclement formed by the pursuers, we connect the pursuers with brown solid line to form a polygon, which is contained in the true encirclement, in Fig.~\ref{fig:RMPC_sim}, Fig.~\ref{fig:vor_sim}, and Fig.~\ref{fig:DC_sim}. Since our EGP has 4 partitions and at each time step, 4 pursuers act to ensure encirclement and one pursuer use the direct charge policy, we only connect these 4 pursuers in Fig.~\ref{fig:RMPC_sim}.  The green arrows show the direction of the pursuers' move.  As shown in the figure, when the evader is near the boundary of the polygon, the pursuers move in the direction to make sure the evader will be inside the encirclement in the next time step. Results in Fig.~\ref{fig:comparison}-(a) and (b) compare the three methods in terms of the encirclement condition and the capture condition. As shown in our simulation, our TMPC based method can ensure encirclement and capture the evader in the end while the other two methods will fail to keep the evader inside the encirclement before capturing the evader.

\begin{figure}[tp!]
	\centering
	{\includegraphics[width=1.0\linewidth]{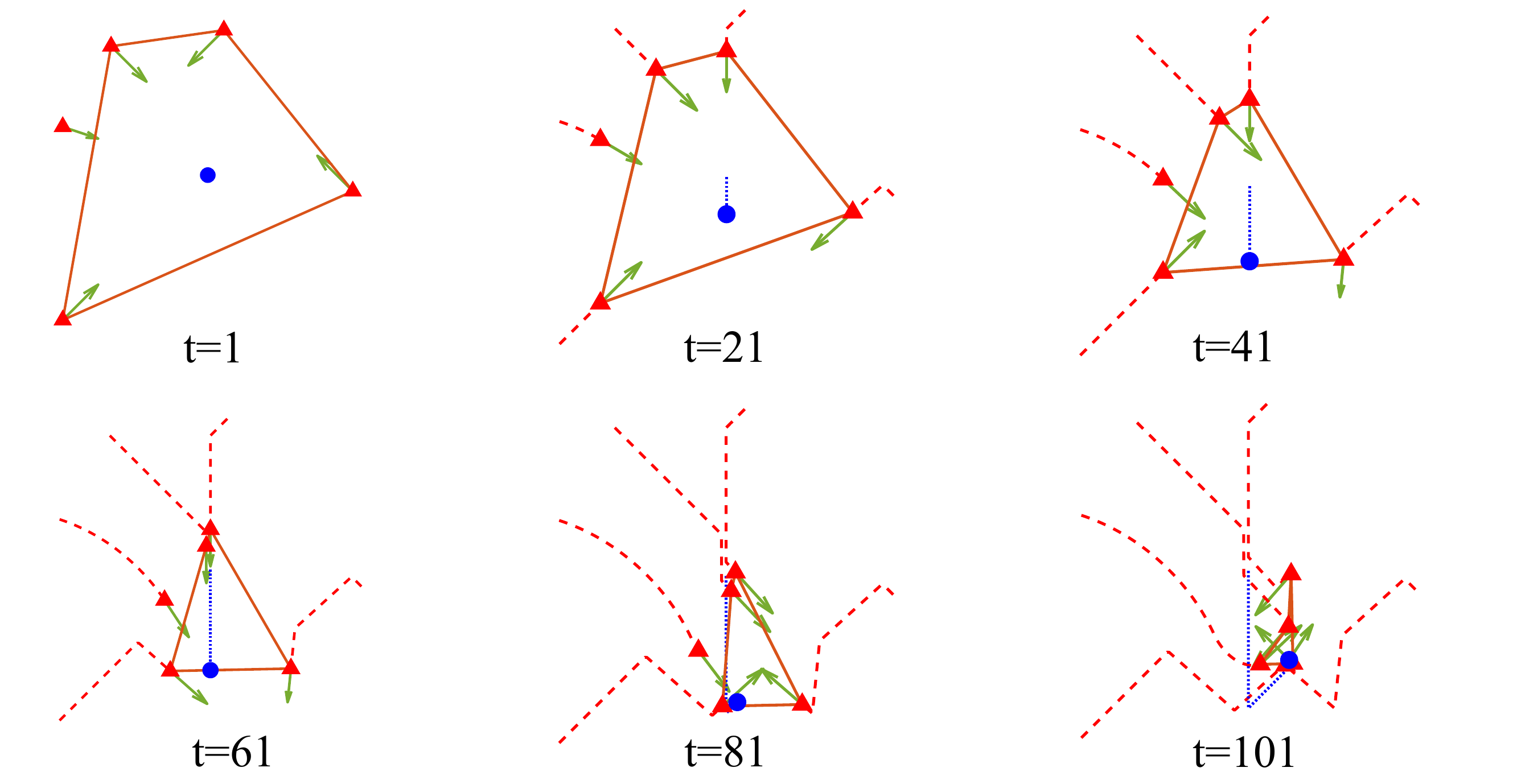}}
	\caption{\footnotesize Simulation snapshots with the proposed TMPC-based method.}
	\label{fig:RMPC_sim}
	\vspace{-15px}
\end{figure}

\begin{figure}[tp!]
	\centering
	{\includegraphics[width=1.0\linewidth]{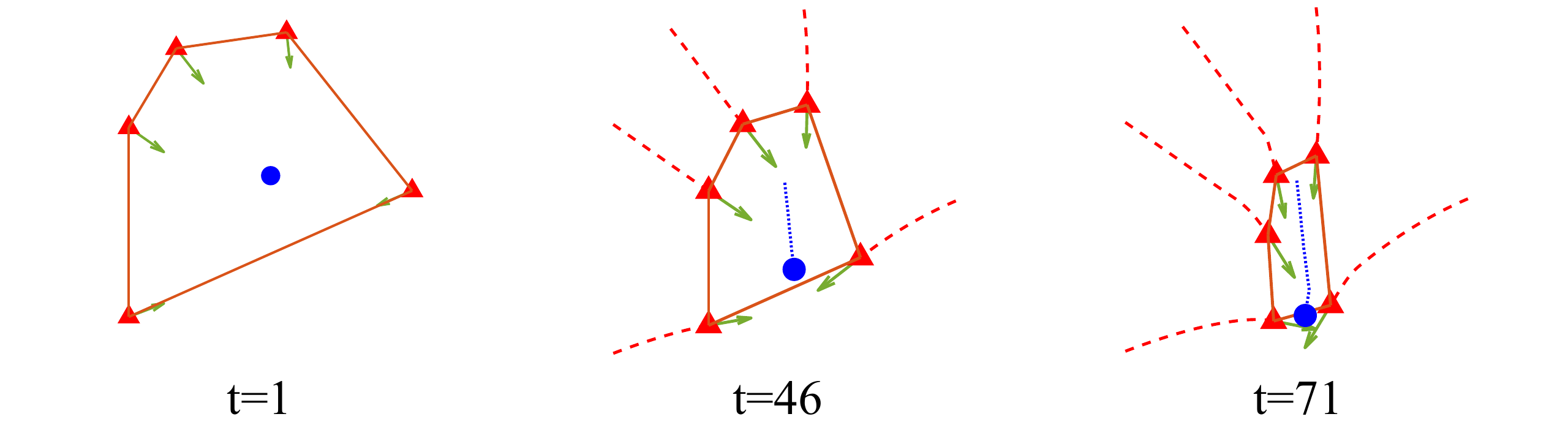}}
	\caption{\footnotesize Simulation snapshots with Voronoi partition method\cite{huang2011guaranteed, zhou2016cooperative}}
	\label{fig:vor_sim}
	\vspace{-10px}
\end{figure}

\begin{figure}[tp!]
	\centering
	{\includegraphics[width=1.0\linewidth]{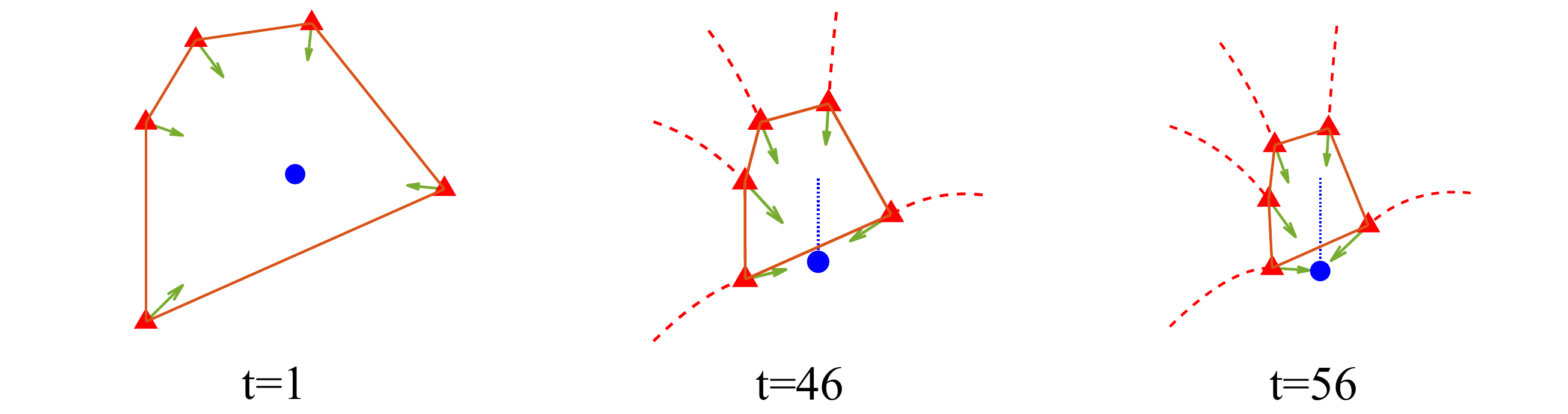}}
	\caption{\footnotesize Simulation snapshots with Direct Charge (DC) method\cite{Mac17RAL}}
	\label{fig:DC_sim}
	\vspace{-15px}
\end{figure}



\begin{figure}[tp!]
	\centering
	{\includegraphics[width=1\linewidth]{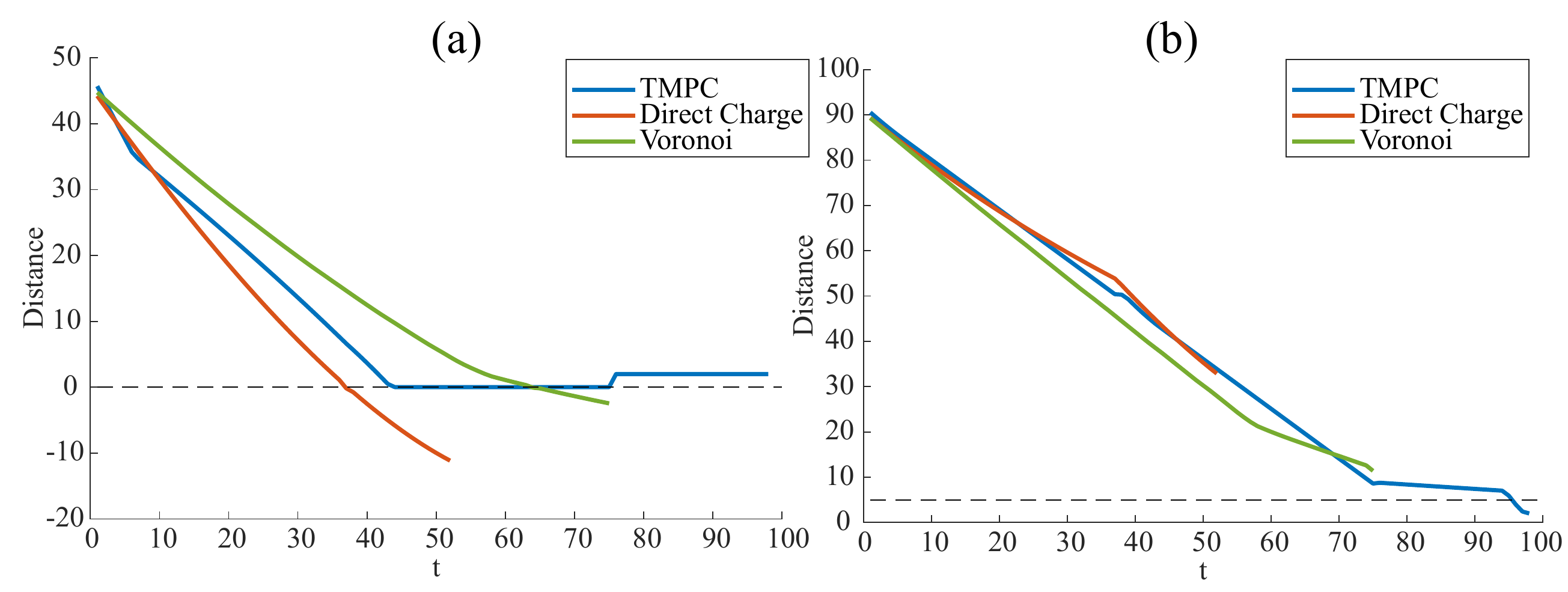}}
	\caption{\footnotesize (a) Distance between the evader and the polygon's  bottom boundary versus time for all three methods. The black dashed line highlighted the zero distance. We regard the distance negative once the evader escape from this boundary. (b) Distance between the evader and the nearest pursuer versus time. The black dashed line denoted the capture radius.}
	\label{fig:comparison}
	\vspace{-15px}
\end{figure}

\section{Conclusions and Future Works}\label{sec:con} 
This paper develops a robust model predictive control (RMPC) based framework and a tractable decentralized solution strategy for the encirclement guaranteed cooperative pursuit problem in an unbounded two-dimensional game domain. By treating the uncertain evader's action as disturbance, the developed RMPC based framework takes care of the encirclement and capture conditions simultaneously. To further address the bilinearities in the RMPC formulation, a novel encirclement guaranteed partitioning scheme is devised, via in-depth geometric analyses of the encirclement condition. Such a partitioning scheme in turn simplifies the original bilinear RMPC problem to a number of linear tube MPC (TMPC) problems efficiently solvable in a decentralized manner.

Possible extensions of this work include consideration of more realistic agent dynamics and multiple evaders. Currently, we require at least 4 pursuers. In the future, we will extend our method to consider the case with only 3 pursuers. On the other hand, validating the proposed solution with hardware experiments is on the list of future works as well.

\bibliographystyle{ieeetr}
\bibliography{mybibfile}

\end{document}